\title{{\sc Independent Set in Categorical Products of Cographs 
and Splitgraphs}}
\author{
 Wing-Kai~Hon\inst{1}
\and
 Ton~Kloks\inst{1}
\and 
 Hsiang~Hsuan~Liu\inst{1}
\and 
 Sheung-Hung~Poon\inst{1} 
\and
 Yue-Li~Wang\inst{2}} 
\institute{
 Department of Computer Science\\
 National Tsing Hua University, Taiwan\\
 {\tt \{wkhon,hhliu,kloks,spoon\}@cs.nthu.edu.tw} 
\and 
 Department of Information Management\\ 
 National Taiwan University of Science and Technology\\ 
 {\tt ylwang@cs.ntust.edu.tw}
}
\begin{document}

\maketitle

\begin{abstract}
We show that there are polynomial-time 
algorithms to compute maximum independent sets  
in the categorical products of two cographs and 
two splitgraphs. We show that the ultimate categorical 
independence ratio is computable in polynomial time 
for cographs.  
\end{abstract}

\section{Introduction}
\label{section intro}

Let $G$ and $H$ be two graphs. The categorical product also 
goes under the name of tensor product, or direct product, 
or Kronecker product, and even more names have been given to it. 
It is defined as follows. It is a graph, denoted as 
$G \times H$. Its vertices are the ordered pairs 
$(g,h)$ where $g \in V(G)$ and $h \in V(H)$. Two of its 
vertices, say $(g_1,h_1)$ and $(g_2,h_2)$ are adjacent 
if 
\[\{\;g_1,\;g_2\;\} \in E(G) 
\quad\text{and}\quad \{\;h_1,\;h_2\;\} \in E(H).\]  

\bigskip 

One of the reasons for its popularity is Hedetniemi's 
conjecture, which is now more 
than 40 years old~\cite{kn:hedetniemi,kn:tardif,kn:sauer,kn:zhu}. 
\begin{conjecture}
For any two graphs $G$ and $H$ 
\[\chi(G \times H) = \min\;\{\;\chi(G),\;\chi(H)\;\}.\] 
\end{conjecture}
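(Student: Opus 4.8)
The plan is to treat the two inequalities implicit in the asserted equality separately. The inequality $\chi(G \times H)\le\min\{\chi(G),\chi(H)\}$ is the routine half, and I would dispose of it first. The coordinate projections $\pi_G\colon (g,h)\mapsto g$ and $\pi_H\colon (g,h)\mapsto h$ are graph homomorphisms from $G\times H$ to $G$ and to $H$: if $(g_1,h_1)$ and $(g_2,h_2)$ are adjacent in the product, then by definition $\{g_1,g_2\}\in E(G)$, so $\pi_G$ carries edges to edges, and symmetrically for $\pi_H$. Composing an optimal proper colouring of $G$ with $\pi_G$ produces a proper colouring of $G\times H$ using $\chi(G)$ colours, and likewise $\chi(H)$ colours via $\pi_H$; taking the smaller of the two gives
\[
\chi(G\times H)\;\le\;\min\{\chi(G),\chi(H)\}.
\]

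All the content lies in the reverse inequality $\chi(G\times H)\ge\min\{\chi(G),\chi(H)\}$. I would first recast it in the cleanest equivalent form: it suffices to show that for every integer $n$, \emph{if $G\times H$ admits a proper $n$-colouring then at least one of $G$, $H$ admits a proper $n$-colouring}. This isolates a single parameter and invites an induction on $n$. The low cases are genuinely accessible --- the statement is classical for $\min\{\chi(G),\chi(H)\}\le 3$, and the case of value $4$ is the theorem of El-Zahar and Sauer~\cite{kn:sauer} --- so the real work is the inductive step.

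For that step I would try to extract usable structure from a hypothetical proper $n$-colouring $c$ of $G\times H$ by studying a single colour class $C=c^{-1}(i)$ through its projections $\pi_G(C)\subseteq V(G)$ and $\pi_H(C)\subseteq V(H)$. Because $C$ is independent in the product, any two of its members are non-adjacent in at least one coordinate, which constrains how the fibres over $G$ and over $H$ can meet; the hope is to aggregate such constraints across all colour classes into a proper $n$-colouring of one factor. A softer companion route is the fractional relaxation, where the analogous identity $\chi_f(G\times H)=\min\{\chi_f(G),\chi_f(H)\}$ is known~\cite{kn:zhu}; but since the gap between $\chi_f$ and $\chi$ can be arbitrarily large, fractional information alone cannot close the integer statement.

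The hard part is precisely this inductive step, and I expect any honest attempt to stall there: a colour class of the product need not respect either projection in any controllable way, and no mechanism is known for converting a colouring of the product into a colouring of a factor once $n$ is large. Accordingly I would flag that the general statement is a long-standing open problem rather than a lemma with a routine proof, and that a complete strategy must remain open to the possibility that the reverse inequality fails for large $n$. The realistic deliverable here is thus the easy inequality together with the reduction and the small-parameter base cases.
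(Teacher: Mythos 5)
You have correctly recognized that this statement is Hedetniemi's conjecture, which the paper presents as an open conjecture and does not prove; the only part the paper argues is the upper bound $\chi(G\times H)\le\min\{\chi(G),\chi(H)\}$, via the colouring $f'((g,h))=f(g)$, which is exactly your projection-homomorphism argument. Your treatment of the easy half matches the paper, and your refusal to claim the reverse inequality is the right call --- indeed that inequality was later refuted by Shitov (2019) for graphs of large chromatic number, so your caveat that it ``may fail for large $n$'' is not merely prudent but correct.
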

It is easy to see that the right-hand side is an upperbound. 
Namely, if $f$ is a vertex coloring of $G$ then one can color 
$G \times H$ by defining a coloring $f^{\prime}$ 
as follows 
\[f^{\prime}((g,h))=f(g), \quad\text{for all $g \in V(G)$ and $h \in V(H)$.}\]  
Recently, it was shown that the fractional version of 
Hedetniemi's conjecture is true~\cite{kn:zhu2}. 

\bigskip 

When $G$ and $H$ are perfect then Hedetniemi's conjecture 
is true. Namely, let $K$ be a clique of cardinality 
at most 
\[|K| \leq \min \;\{\;\omega(G),\;\omega(H)\;\}.\] 
It is easy to check that $G \times H$ has a clique 
of cardinality $|K|$. One obtains an `elegant' proof, 
via homomorphisms, as follows.  
By assumption, there exist homomorphisms $K \rightarrow G$ and 
$K \rightarrow H$. 
This implies that there also is a 
homomorphism $K \rightarrow G \times H$ (see, eg,~\cite{kn:hahn,kn:hell}). 
(Actually, if $W$, $P$ and $Q$ are any graphs, 
then there exist homomorphisms $W \rightarrow P$ and 
$W \rightarrow Q$ if and only if there exists a 
homomorphism $W \rightarrow P \times Q$.)
In other words~\cite[Observation~5.1]{kn:hahn}, 
\[\omega(G \times H) \geq \min\;\{\;\omega(G),\;\omega(H)\;\}.\] 
Since $G$ and $H$ are perfect, $\omega(G)=\chi(G)$ and 
$\omega(H)=\chi(H)$. This proves the claim, since 
\begin{equation}
\begin{split}
\chi(G \times H) \geq \omega(G \times H) \geq 
& \min\;\{\;\omega(G),\;\omega(H)\;\} \\
= & \min\;\{\;\chi(G),\;\chi(H)\;\} 
\geq \chi(G \times H).
\end{split}
\end{equation}

\bigskip 

Much less is known about the independence 
number of $G \times H$. It is easy to see that 
\begin{equation}
\label{eqn0}
\alpha(G \times H) \geq \max\;\{\;\alpha(G) \cdot |V(H)|,\;
\alpha(H) \cdot |V(G)|\;\}.
\end{equation}
But this lowerbound can be arbitrarily bad, even for 
cographs~\cite{kn:jha}. For any graph $G$ and any 
natural number $k$ there exists a cograph $H$ 
such that 
\[\alpha(G \times H) \geq k+L(G,H),\] 
where 
$L(G,H)$ is the lowerbound expressed in~\eqref{eqn0}. 
When $G$ and $H$ are vertex transitive then equality holds 
in~\eqref{eqn0}~\cite{kn:zhang}. 

\bigskip 

\begin{definition}
A graph is a cograph if it has no induced 
$P_4$, ie, a path with four vertices. 
\end{definition}
Cographs are characterized by the property that 
every induced subgraph $H$ satisfies one of 
\begin{enumerate}[\rm (a)]
\item $H$ has only one vertex, or  
\item $H$ is disconnected, or 
\item $\Bar{H}$ is disconnected. 
\end{enumerate}
It follows that cographs can be represented 
by a cotree. This is pair $(T,f)$ where $T$ is a 
rooted tree and $f$ is a 1-1 map from the vertices 
of $G$ to the leaves of $T$. Each internal node of $T$, 
including the root,  
is labeled as $\otimes$ or $\oplus$. When the label 
is $\oplus$ then the subgraph $H$, induced by the vertices 
in the leaves, is disconnected. Each child of the 
node represents one component. When the node is labeled 
as $\otimes$ then the complement of the induced 
subgraph $H$ is disconnected. In that case, each component 
of the complement is represented by one child of the node. 

When $G$ is a cograph then a cotree for $G$ can be obtained 
in linear time. 

\bigskip 

Cographs are perfect, see, eg,~\cite[Section~3.3]{kn:kloks2}. 
When $G$ and $H$ are cographs then $G \times H$ is not 
necessarily perfect. For example, when $G$ is the paw, 
ie, $G \simeq K_1 \otimes (K_2 \oplus K_1)$ then $G \times K_3$ 
contains an induced $C_5$~\cite{kn:ravindra}. Ravindra and 
Parthasarathy characterize the pairs $G$ and $H$ for which 
$G \times H$ is perfect~\cite[Theorem~3.2]{kn:ravindra}. 

\section{Independence in categorical products of cographs}

It is well-known that $G \times H$ is connected if and only 
if both $G$ and $H$ are connected and at least one of them is not 
bipartite~\cite{kn:weichsel}. When $G$ and $H$ are connected 
and bipartite, then $G \times H$ consists of two components. 
In that case, two vertices $(g_1,h_1)$ and $(g_2,h_2)$ 
belong to the same component if the distances $d_G(g_1,g_2)$ 
and $d_H(h_1,h_2)$ have the same parity.  

\bigskip 

\begin{definition}
The rook's graph $R(m,n)$ is the linegraph of 
the complete bipartite graph $K_{m,n}$. 
\end{definition}
The rook's graph $R(m,n)$ has as its vertices 
the vertices of the grid, 
$(i,j)$, with $1 \leq i \leq m$ and $1 \leq j \leq n$. 
Two vertices are adjacent if they are in the same 
row or column of the grid. 
The rook's graph is perfect, since all linegraphs 
of bipartite graphs are perfect (see, eg,~\cite{kn:kloks2}). 
By the perfect graph theorem, also the complement 
of rook's graph is perfect~\cite{kn:lovasz}. 

\begin{lemma}
\label{lm rook}
Let $m,n \in \mathbb{N}$. Then 
\[K_m \times K_n \simeq \Bar{R},\]
where $\Bar{R}$ is the complement of the 
rook's graph $R=R(m,n)$.
\end{lemma}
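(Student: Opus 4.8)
The plan is to establish the isomorphism directly by exhibiting an explicit bijection between the vertex sets and checking it preserves adjacency in both directions.

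First I would identify the vertex sets. The vertices of $K_m \times K_n$ are ordered pairs $(i,j)$ with $i \in V(K_m)$ and $j \in V(K_n)$, which I can take to be $1 \le i \le m$ and $1 \le j \le n$. The vertices of the rook's graph $R(m,n)$ are exactly the grid points $(i,j)$ over the same ranges, and the complement $\Bar{R}$ has the same vertex set. So the identity map on grid points is the natural candidate bijection, and I would verify it is the desired isomorphism.

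Next I would compare adjacency relations under this map. In $K_m \times K_n$, two distinct vertices $(i_1,j_1)$ and $(i_2,j_2)$ are adjacent precisely when $\{i_1,i_2\} \in E(K_m)$ and $\{j_1,j_2\} \in E(K_n)$. Since $K_m$ and $K_n$ are complete, the first condition just says $i_1 \neq i_2$ and the second says $j_1 \neq j_2$. So in the product, two distinct grid points are adjacent exactly when they differ in both coordinates. In $R(m,n)$, two distinct grid points are adjacent exactly when they agree in some coordinate (same row or same column), i.e. when they share at least one coordinate. Passing to the complement $\Bar{R}$, two distinct grid points are adjacent exactly when they are nonadjacent in $R$, i.e. when they differ in both coordinates. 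This matches the product adjacency exactly, so the identity map is an isomorphism.

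The main step, and the only place requiring care, is the translation of ``same row or column'' into a statement about coordinates and the subsequent negation when taking the complement. Concretely, I would state it as: distinct points share a row or column $\iff$ they agree in at least one coordinate, so its negation is that they disagree in both coordinates, which is precisely the completeness-driven adjacency condition in $K_m \times K_n$. The argument is otherwise a routine coordinate check, so I do not expect any genuine obstacle; the value lies in making the identification $K_m \times K_n \simeq \Bar{R}$ explicit so it can be invoked together with the perfection of $\Bar{R}$ noted earlier.
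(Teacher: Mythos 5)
Your proposal is correct and follows essentially the same route as the paper: identify the vertices of $K_m \times K_n$ with the grid points of $R(m,n)$, observe that product-adjacency means differing in both coordinates, and note that this is exactly non-adjacency in the rook's graph. Your version merely spells out the coordinate check and the negation step in more detail than the paper's three-line argument.
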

\begin{proof}
Two vertices $(i,j)$ and $(i^{\prime},j^{\prime})$ 
are adjacent in $K_m \times K_n$ when $i \neq i^{\prime}$ 
and $j \neq j^{\prime}$. That is, they are adjacent 
when they are not in the same row or column of the 
$m \times n$ grid. Thus, $K_m \times K_n$ is the 
complement of the rook's graph $R(m,n)$. 
\qed\end{proof}

\bigskip 

\begin{lemma}
\label{lm cmp}
Let $G$ and $H$ be complete multipartite. 
Then $G \times H$ is perfect. 
\end{lemma}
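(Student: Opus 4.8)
The plan is to exhibit $G \times H$ as a blow-up of the perfect graph $K_p \times K_q$ by independent sets, and then appeal to Lovász's substitution lemma for perfect graphs. First I would fix the multipartite structure: write $G$ as the complete multipartite graph with parts $A_1,\dots,A_p$ and $H$ with parts $B_1,\dots,B_q$. For $g \in V(G)$ let $\pi(g)$ denote the index of the part containing $g$, and for $h \in V(H)$ let $\rho(h)$ denote the index of its part. By the definition of complete multipartite graphs, $\{g,g'\} \in E(G)$ exactly when $\pi(g) \neq \pi(g')$, and similarly for $H$. Hence, by the definition of the categorical product, $(g,h)$ and $(g',h')$ are adjacent in $G \times H$ precisely when $\pi(g) \neq \pi(g')$ \emph{and} $\rho(h) \neq \rho(h')$.

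The key observation is that this adjacency depends only on the pair $(\pi(g),\rho(h))$. Consider the map $\Phi(g,h) = (\pi(g),\rho(h))$ into $V(K_p \times K_q)$. Two vertices with the same image form a non-adjacent pair, hence map to an independent set; two vertices with distinct images $(a,b)$ and $(a',b')$ are adjacent iff $a \neq a'$ and $b \neq b'$, which is exactly the adjacency of $(a,b)$ and $(a',b')$ in $K_p \times K_q$. Therefore $G \times H$ is obtained from $K_p \times K_q$ by replacing each vertex $(a,b)$ with an independent set of size $|A_a|\cdot|B_b|$.

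Finally I would invoke Lemma~\ref{lm rook}, which gives $K_p \times K_q \simeq \Bar{R}(p,q)$, the complement of a rook's graph; as noted before the lemma, this graph is perfect. Since independent sets are trivially perfect and the class of perfect graphs is closed under substituting a perfect graph for a single vertex (Lovász's replacement lemma), applying this substitution once for each vertex of $K_p \times K_q$ shows that $G \times H$ is perfect.

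I expect the only delicate point to be verifying that the blow-up description is \emph{exact} — in particular the case in which two product-vertices share a part in one coordinate but differ in the other (so one of $\pi,\rho$ collides); there one must confirm that both $G \times H$ and the claimed substitution declare the pair non-adjacent, which is immediate from the ``and'' in the product's adjacency rule. The degenerate cases $p=1$ or $q=1$ reduce to an edgeless factor, so $G \times H$ is then itself edgeless and perfect.
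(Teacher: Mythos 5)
Your proof is correct, but it takes a genuinely different route from the paper. The paper quotes the Ravindra--Parthasarathy characterization of when $G \times H$ is perfect and verifies its condition (b): complete multipartite graphs contain no induced odd hole (they are perfect) and no induced paw (their complements are unions of cliques, hence $P_3$-free, while the complement of the paw contains a $P_3$). You instead exhibit $G \times H$ explicitly as the graph obtained from $K_p \times K_q$ --- the complement of the rook's graph $R(p,q)$, already shown perfect via Lemma~\ref{lm rook} --- by substituting an independent set of size $|A_a|\cdot|B_b|$ for each vertex $(a,b)$, and then invoke Lov\'asz's replacement lemma. Both arguments are sound; your verification of the one delicate case (images colliding in exactly one coordinate) is exactly the point that needs checking, and it does go through because the product's adjacency is a conjunction. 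What each approach buys: the paper's is shorter given the cited characterization, but leans on an external theorem about perfect products; yours is more self-contained (the replacement lemma is classical and comes from the very Lov\'asz paper the authors already cite for the perfect graph theorem) and, as a bonus, establishes the ``generalized rook's graph'' structure of $G \times H$ that the paper in any case describes separately and relies on in the following lemma to determine $\alpha(G \times H)$. So your proof does double duty, at the cost of being a little longer.
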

\begin{proof}
Ravindra and Parthasarathy prove that $G \times H$ is 
perfect if and only if either 
\begin{enumerate}[\rm (a)]
\item $G$ or $H$ is bipartite, or 
\item Neither $G$ nor $H$ contains an induced odd 
cycle of length at least 5 nor an induced paw. 
\end{enumerate}
Since $G$ and $H$ are perfect, they do not contain 
an odd hole~\cite{kn:chudnovsky}. 
Furthermore, the complement of $G$ and $H$ is a 
union of cliques, and so the complements are $P_3$-free. 
The complement of a paw is $K_1 \oplus P_3$ and so 
it has an induced $P_3$. This proves the claim. 
\qed\end{proof}

\bigskip 

Let $G$ and $H$ be complete multipartite. Let $G$ be the 
join of $m$ independent sets, say with $p_1, \dots, p_m$ vertices, 
and let $H$ be the join of 
$n$ independent sets, say with $q_1,\dots,q_n$ vertices.  
We shortly describe how $G \times H$ is obtained from 
the complement of the rook's graph $R(m,n)$. We call the 
structure a generalized rook's graph.  

\bigskip 

Each vertex $(i,j)$ in $R(m,n)$ is replaced by an 
independent set $I(i,j)$   
of cardinality $p_i \cdot q_j$. Denote the vertices 
of this independent set as 
\[(i_s,j_t) \quad\text{where $1 \leq s \leq p_i$ and 
$1 \leq t \leq q_j$.}\] 
Two vertices $(i_s,j_t)$ and $(i^{\prime}_s,j_t)$ 
are adjacent and these types of row- and column-adjacencies 
are the only adjacencies in this generalized rook's graph. 
The graph $G \times H$ is obtained from the partial complement 
of the generalized rook's graph. 

\bigskip 

\begin{lemma}
Let $G$ and $H$ be complete multipartite graphs. 
Then 
\begin{equation}
\label{eqn1}
\alpha(G \times H) = \kappa(G \times H)=
\max\;\{\;\alpha(G) \cdot |V(H)|,\;\alpha(H) \cdot |V(G)|\;\}.
\end{equation}
\end{lemma}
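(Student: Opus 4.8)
The plan is to prove the three quantities are equal by establishing a chain of inequalities, using the fact that $G \times H$ is perfect (Lemma~\ref{lm cmp}), so that $\alpha(G\times H)=\kappa(G\times H)$ automatically. It therefore suffices to show $\alpha(G\times H) \le \max\{\alpha(G)\cdot|V(H)|,\;\alpha(H)\cdot|V(G)|\}$, since the reverse inequality is exactly the general lower bound~\eqref{eqn0}, and the coincidence with $\kappa$ follows from perfection once the value of $\alpha$ is pinned down.

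First I would use the generalized rook's graph description developed just before the statement. Recall $G$ is the join of independent sets of sizes $p_1,\dots,p_m$ (so $\alpha(G)=\max_i p_i$ and $|V(G)|=\sum_i p_i$), and similarly $H$ with $q_1,\dots,q_n$. In the generalized rook's graph, vertex $(i_s,j_t)$ is adjacent to $(i'_{s'},j'_{t'})$ in $G\times H$ precisely when $i\neq i'$ and $j\neq j'$ (the complement of the row/column adjacency, restricted to distinct blocks). An independent set $S$ in $G\times H$ is thus a set of grid-positions with multiplicities such that any two chosen positions share a row or share a column; in other words, the set of occupied blocks $(i,j)$ forms an independent set in $K_m\times K_n = \Bar R$, which by Lemma~\ref{lm rook} means the occupied blocks all lie in a single row or a single column of the $m\times n$ grid.

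The heart of the argument is this structural observation: a maximum independent set of $G \times H$ must occupy blocks confined to one row $i_0$ or one column $j_0$. If the blocks lie in row $i_0$, then $S$ can take all $p_{i_0}\cdot q_j$ vertices of each block $(i_0,j)$ for $j=1,\dots,n$, giving $p_{i_0}\cdot\sum_j q_j = p_{i_0}\cdot|V(H)| \le \alpha(G)\cdot|V(H)|$; symmetrically, a single column yields at most $\alpha(H)\cdot|V(G)|$. Taking the maximum over the two orientations and over the choice of row/column gives the desired upper bound. The main obstacle, and the step I would write most carefully, is verifying that confinement to a single row or column is genuinely forced — i.e.\ that one cannot do better by mixing. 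This reduces to the claim that $\alpha(\Bar R)=\max\{m,n\}$ with every maximum independent set of $\Bar R$ being a full row or full column; equivalently, that a maximum clique of the rook's graph $R(m,n)$ is a full row or full column. This is a standard fact about line graphs of $K_{m,n}$ (maximal cliques of a line graph correspond to stars or triangles in the base graph, and $K_{m,n}$ is triangle-free), which I would invoke to close the argument.
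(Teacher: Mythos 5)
Your proposal is correct and follows essentially the same route as the paper: both reduce the structure of independent sets in $G \times H$ to independent sets in the complement of the rook's graph $R(m,n)$ via the generalized rook's graph, observe that these must be confined to a single row or column (equivalently, that cliques of $L(K_{m,n})$ are stars since $K_{m,n}$ is triangle-free), and then invoke perfection from Lemma~\ref{lm cmp} to conclude $\kappa(G\times H)=\alpha(G\times H)$. Your write-up is in fact more explicit than the paper's about why confinement to one row or column is forced, which is the step the paper leaves to the reader.
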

\begin{proof}
Two vertices $(g_1,h_1)$ and $(g_2,h_2)$ are 
adjacent if $g_1$ and $g_2$ are not in a common 
independent set in $G$ and $h_1$ and $h_2$ are 
not in a common independent set in $H$. 

\medskip 

\noindent
Let $\Omega$ be a maximum independent set of $G$. 
Then 
\[\{\;(g,h)\;|\; g \in \Omega \quad\text{and}\quad h \in V(H)\;\}\] 
is an independent set in $G \times H$. 
We show that all maximal independent sets are of this form 
or of the symmetric form with $G$ and $H$ exchanged. 

\medskip 

\noindent
Consider the complement of the 
rook's graph. Any independent set 
must have all its vertices in one row or in 
one column. 
This shows that every maximal independent set in $G \times H$ 
is a generalized row or column in the rook's graph. 
Since the graphs are perfect, the number of cliques 
in a clique cover of $G \times H$ equals $\alpha(G \times H)$. 
\qed\end{proof}

\bigskip 

\begin{remark}
Notice that complete multipartite graphs are  
not vertex transitive, unless all independent sets 
have the same cardinality. 
\end{remark}

\bigskip 

\begin{lemma}
\label{lm union}
Let $G$ and $H$ be cographs and assume that 
$G$ is disconnected. Say that $G=G_1 \oplus G_2$. 
Then 
\[\alpha(G \times H)=\alpha(G_1 \times H)+\alpha(G_2 \times H).\]
\end{lemma}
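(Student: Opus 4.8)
The plan is to show that the categorical product distributes over disjoint union, so that $G \times H$ itself decomposes as the disjoint union $(G_1 \times H) \oplus (G_2 \times H)$; the additivity of $\alpha$ over disconnected graphs then yields the claim immediately. I would note in passing that this structural identity holds for arbitrary $G$ and $H$, so the cograph hypothesis is not really needed, and I would prove it in that generality.

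First I would record the defining consequence of $G = G_1 \oplus G_2$ being a disjoint union: there is no edge of $G$ between $V(G_1)$ and $V(G_2)$. Correspondingly, the vertex set of the product partitions as
\[
V(G \times H) = \bigl(V(G_1) \times V(H)\bigr) \;\sqcup\; \bigl(V(G_2) \times V(H)\bigr).
\]
Next I would examine adjacencies across this partition. By definition $(g_1,h_1)$ and $(g_2,h_2)$ can be adjacent only if $\{g_1,g_2\} \in E(G)$; but when $g_1 \in V(G_1)$ and $g_2 \in V(G_2)$ no such edge exists, so the two blocks are completely non-adjacent in $G \times H$.

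It then remains to identify each block with the corresponding product. Restricting to $V(G_1) \times V(H)$, two of its vertices $(g,h)$ and $(g',h')$ are adjacent in $G \times H$ exactly when $\{g,g'\} \in E(G)$ and $\{h,h'\} \in E(H)$; since $g,g' \in V(G_1)$, the first condition is equivalent to $\{g,g'\} \in E(G_1)$, which is precisely the adjacency rule of $G_1 \times H$. Hence the subgraph induced on this block is isomorphic to $G_1 \times H$, and symmetrically for $G_2 \times H$. Combining the three observations gives
\[
G \times H \;\simeq\; (G_1 \times H) \;\oplus\; (G_2 \times H).
\]

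Finally, because a maximum independent set in a disconnected graph is the union of maximum independent sets in its components, $\alpha$ is additive over $\oplus$, and the stated equality follows. I do not expect a genuine obstacle here: the whole argument is a direct unwinding of the product definition, and the only point warranting care is verifying that both the ``no cross-edges'' claim and the block-identification step are read off correctly from the joint adjacency condition $\{g_1,g_2\} \in E(G)$ together with $\{h_1,h_2\} \in E(H)$.
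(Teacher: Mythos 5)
Your proof is correct and follows the same route as the paper, which simply observes that no vertex of $V(G_1) \times V(H)$ is adjacent to any vertex of $V(G_2) \times V(H)$; you additionally spell out the block identification $G \times H \simeq (G_1 \times H) \oplus (G_2 \times H)$ and the (correct) remark that the cograph hypothesis is not needed. No gaps.
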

\begin{proof}
By definition of the categorical product, 
no vertex of $V(G_1) \times V(H)$ is adjacent to any 
vertex of $V(G_2) \times V(H)$. 
\qed\end{proof}

\begin{lemma}
\label{lm join}
Let $G$ and $H$ be connected cographs. 
Say $G=G_1 \otimes G_2$ and $H=H_1 \otimes H_2$. 
Then 
\[\alpha(G \times H)=\min \;\{\;\alpha(G_1 \times H),\;
\alpha(G_2 \times H),\;\alpha(G \times H_1),\;\alpha(G \times H_2)\;\}.\] 
\end{lemma}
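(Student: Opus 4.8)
The plan is to prove the identity by establishing the two inequalities separately, exploiting both join decompositions. First I would fix notation: writing $V(G)=V(G_1)\cup V(G_2)$ and $V(H)=V(H_1)\cup V(H_2)$ partitions $V(G\times H)$ into the four blocks $V(G_i)\times V(H_j)$. The feature I intend to use throughout is the join property, that every $g_1\in V(G_1)$ is adjacent in $G$ to every $g_2\in V(G_2)$, and every $h_1\in V(H_1)$ is adjacent in $H$ to every $h_2\in V(H_2)$. A first consequence worth recording is that a ``cross pair'' $(g_1,h)$ and $(g_2,h')$, with $g_1\in V(G_1)$ and $g_2\in V(G_2)$, is adjacent in $G\times H$ precisely when $hh'\in E(H)$: the $G$-coordinate never prevents the edge. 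This is the structural observation that should drive the rest of the argument.

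The inequality $\alpha(G\times H)\ge\min\{\dots\}$ is immediate. Each of $G_1\times H$, $G_2\times H$, $G\times H_1$ and $G\times H_2$ is an induced subgraph of $G\times H$: its vertex set is a product of induced subgraphs, and since $G_i$ is induced in $G$ and $H_j$ is induced in $H$, its adjacencies are exactly those inherited from $G\times H$. Hence each of the four independence numbers is at most $\alpha(G\times H)$, so in particular their minimum is at most $\alpha(G\times H)$.

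All the content therefore sits in the reverse inequality $\alpha(G\times H)\le\min\{\dots\}$, which asks that a maximum independent set $S$ of $G\times H$ satisfy $|S|\le\alpha(G_i\times H)$ for each $i$ and $|S|\le\alpha(G\times H_j)$ for each $j$. The route I would take is to produce, from $S$, an independent set of the same cardinality inside each of the four factor-products by ``folding'' the factor that the product drops. To bound $|S|$ by $\alpha(G_1\times H)$, for instance, I would fix a map $r\colon V(G)\to V(G_1)$ that is the identity on $V(G_1)$, extend it to $V(G\times H)$ by $r(g,h)=(r(g),h)$, and argue that $r(S)$ is an independent set of $G_1\times H$ with $|r(S)|=|S|$; the symmetric constructions would handle the other three terms.

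The step I expect to be the genuine obstacle is the existence of this folding map with the right properties. By the cross-pair observation, the join already makes every $G_1$--$G_2$ pair adjacent, so within $r(S)$ the only pairs that could become a new edge are images of two $V(G_2)$-vertices; to avoid creating an edge there, $r$ must send each non-edge of $G_2$ to a non-edge of $G_1$. In other words, $r$ has to embed the complement of $G_2$ into the complement of $G_1$ while fixing $V(G_1)$, and it must in addition remain injective on the part of $S$ lying over $V(G_2)$ so that $|r(S)|=|S|$. Such a map need not exist for arbitrary $G_1$ and $G_2$, and this is precisely where the cograph hypothesis has to enter: I would use the cotrees of $G_1$ and $G_2$ to build the folding level by level, routing the $G_2$-part of $S$ into a dominating vertex at an $\otimes$-node or an isolated vertex at an $\oplus$-node, and symmetrically for $H_1$ and $H_2$. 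Carrying this out for all four collapses simultaneously, and verifying the no-collision and no-new-edge conditions, is the crux on which the lemma turns and where I would concentrate the effort.
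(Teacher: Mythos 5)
Your easy direction is fine, but it already proves more than the claimed identity can tolerate, and the hard direction you outline is chasing a false statement. Since each of $G_1\times H$, $G_2\times H$, $G\times H_1$, $G\times H_2$ is an induced subgraph of $G\times H$, your first inequality actually gives $\alpha(G\times H)\ge\max\{\alpha(G_1\times H),\alpha(G_2\times H),\alpha(G\times H_1),\alpha(G\times H_2)\}$; if your folding step could be completed you would get $\alpha(G\times H)\le\min\{\cdots\}$, forcing all four terms to be equal --- a red flag you should have checked. They need not be equal: take $G=K_3=K_1\otimes K_2$ and $H=K_2=K_1\otimes K_1$. Then $G\times H\simeq C_6$, so $\alpha(G\times H)=3$ (a full column $\{(g,h_0): g\in V(G)\}$ is independent), while $\alpha(G_1\times H)=\alpha(K_1\times K_2)=2$; indeed $V(G_1)\times V(H)$ has only two vertices, so no map $r$ that is injective on a maximum independent set can land in $G_1\times H$, and no amount of cotree engineering will produce one --- $K_3$ and $K_2$ are themselves cographs. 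The ``$\min$'' in the printed statement is an error for ``$\max$'': the paper's own proof (and the recursion in the subsequent theorem) supports only the $\max$ version.

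The paper's actual argument is two lines and goes the opposite way from your folding plan: every vertex of $V(G_1)\times V(H_1)$ is adjacent to every vertex of $V(G_2)\times V(H_2)$, and every vertex of $V(G_1)\times V(H_2)$ to every vertex of $V(G_2)\times V(H_1)$ --- this is exactly the complete-bipartite consequence of the cross-pair observation you recorded and then steered away from. Hence an independent set $S$ meets at most one block of each diagonal pair, and a four-case check places $S$ entirely inside one of $V(G_1)\times V(H)$, $V(G_2)\times V(H)$, $V(G)\times V(H_1)$, $V(G)\times V(H_2)$, giving $|S|\le$ one of the four independence numbers; combined with the induced-subgraph bound this yields $\alpha(G\times H)=\max\{\cdots\}$. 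Which slab contains $S$ depends on $S$, which is why one can only extract a maximum over the four terms, never a bound by each term simultaneously (which is what your fold into a prescribed slab would assert). Note also that this argument uses nothing about cographs beyond the join decomposition --- it holds for joins of arbitrary graphs --- so your instinct that ``the cograph hypothesis has to enter'' at the crux was itself a sign the approach had gone off course; cographs matter only for driving the recursion over the cotree.
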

\begin{proof}
Every vertex of $V(G_1) \times V(H_1)$ is adjacent to 
every vertex of $V(G_2) \times V(H_2)$ and, likewise, 
every vertex of $V(G_1) \times V(H_2)$ is adjacent to 
every vertex of $V(G_2) \times V(H_1)$. 
This proves the claim. 
\qed\end{proof}

\bigskip 

\begin{theorem}
There exists a polynomial-time algorithm 
which computes $\alpha(G \times H)$ when $G$ and $H$ 
are cographs. 
\end{theorem}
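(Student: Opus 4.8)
The plan is to compute $\alpha(G\times H)$ by dynamic programming over the two cotrees. First I would compute, in linear time, a cotree for $G$ and one for $H$, and then binarize them so that every internal node has exactly two children; this keeps the number of nodes linear and lets the binary Lemmas~\ref{lm union} and~\ref{lm join} apply verbatim. For a node $u$ of the (binarized) cotree of $G$ write $G_u$ for the subgraph of $G$ induced by the leaves below $u$, and likewise $H_v$ for a node $v$ of the cotree of $H$; every $G_u$ and $H_v$ is again a cograph. The subproblems are the pairs $(u,v)$, for each of which I would store the single number
\[ A(u,v)\;=\;\alpha(G_u\times H_v), \]
so that the desired answer is $A(r_G,r_H)$ for the two roots $r_G,r_H$. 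Since each cotree has $O(|V(G)|)$ respectively $O(|V(H)|)$ nodes, there are only $O(|V(G)|\cdot|V(H)|)$ subproblems in all.

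The recurrence is read off the node labels, with the leaf cases taking priority. If $u$ is a leaf then $G_u=K_1$, so $G_u\times H_v$ is edgeless and $A(u,v)=|V(H_v)|$; symmetrically $A(u,v)=|V(G_u)|$ if $v$ is a leaf. Otherwise, if $u$ is a $\oplus$-node with children $c_1,c_2$, then $G_u=G_{c_1}\oplus G_{c_2}$ and Lemma~\ref{lm union} gives $A(u,v)=A(c_1,v)+A(c_2,v)$; the case where $v$ is a $\oplus$-node is symmetric. In the remaining case $u$ and $v$ are both $\otimes$-nodes, so $G_u$ and $H_v$ are connected joins $G_{c_1}\otimes G_{c_2}$ and $H_{d_1}\otimes H_{d_2}$, and Lemma~\ref{lm join} computes $A(u,v)$ from the four values $A(c_1,v)$, $A(c_2,v)$, $A(u,d_1)$, $A(u,d_2)$. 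These four cases are exhaustive, and each reduces $A(u,v)$ to values on pairs that are strictly lower in at least one of the two cotrees.

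The step I expect to be the main obstacle is the running-time bound, because the join recurrence is not size-decreasing in the naive sense: the term $A(u,d_1)$ keeps the $G$-side node $u$ fixed and descends only on the $H$-side, so an unmemoized recursion has no decreasing measure and could branch exponentially. The fix is to note that every recursive call replaces exactly one of $u,v$ by a child while leaving the other fixed; the pairs $(u,v)$ are therefore well-founded under the product of the two descendant orders, and with memoization no pair is evaluated more than once. As there are only $O(|V(G)|\cdot|V(H)|)$ pairs, and each is resolved in constant time from its at most four already-computed child-values and precomputed subtree sizes (a single addition in the union case, a single extremum in the join case), the total running time is polynomial, which proves the theorem.
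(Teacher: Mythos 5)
Your proposal is correct and is essentially the algorithm the paper intends: its proof is just the one-line citation of Lemmas~\ref{lm union} and~\ref{lm join}, and you have filled in exactly the intended dynamic program over pairs of cotree nodes, including the leaf base case and the memoization argument that handles the non-size-decreasing join recurrence. No gaps.
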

\begin{proof}
By Lemmas~\ref{lm union} and~\ref{lm join}.
\qed\end{proof}

\section{Splitgraphs}

F\"oldes and Hammer introduced splitgraphs~\cite{kn:foldes}. 
We refer to~\cite[Chapter~6]{kn:golumbic} and~\cite{kn:merris} for 
some background information on this class of graphs. 

\begin{definition}
A graph $G$ is a splitgraph if there is a partition 
$\{S,C\}$ of its vertices such that $G[C]$ is a clique 
and $G[S]$ is an independent set. 
\end{definition}

\begin{theorem}
Let $G$ and $H$ be splitgraphs. There exists a 
polynomial-time algorithm to compute the independence 
number of $G \times H$. 
\end{theorem}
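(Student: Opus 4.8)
The plan is to partition $V(G \times H)$ according to the split partitions $\{S_G,C_G\}$ of $G$ and $\{S_H,C_H\}$ of $H$ into four blocks, $A = S_G \times S_H$, $B = S_G \times C_H$, $D = C_G \times S_H$, and $E = C_G \times C_H$, and to analyze a maximum independent set $I$ block by block. First I would record the elementary adjacency facts: since $S_G$ and $S_H$ are independent, no edge of $G \times H$ has both of its first coordinates in $S_G$, and none has both of its second coordinates in $S_H$. Consequently each of $A$, $B$, $D$ is an independent set, there are no edges between $A$ and $B$ nor between $A$ and $D$, and the only edges inside $A \cup B \cup D$ run between $B$ and $D$. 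Hence $G[A \cup B \cup D]$ is bipartite (with $A$ consisting of isolated vertices), and a maximum independent set of any induced subgraph of it is computable in polynomial time by K\"onig's theorem via a maximum matching.

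The block $E = C_G \times C_H$ induces $K_m \times K_n$ with $m = |C_G|$ and $n = |C_H|$, which by Lemma~\ref{lm rook} is the complement of the rook's graph $R(m,n)$; its independent sets are exactly the subsets of a single row or a single column of the grid, so $I \cap E$ must lie in one row $\{c_r\} \times C_H$ or one column $C_G \times \{d_c\}$. I would therefore enumerate a few cases and in each reduce to a bipartite computation. If $I \cap E = \varnothing$, then $I \subseteq A \cup B \cup D$ and we solve the bipartite problem directly. If $I \cap E$ is a single vertex $(c_r,d_c)$, we guess it (only $mn$ choices), delete its neighbours from $G[A \cup B \cup D]$, and add $1$.

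The main case, and the main obstacle, is $|I \cap E| \ge 2$. Here the crucial observation is that the guessing collapses: if $I$ uses at least two vertices of a fixed row $\{c_r\} \times C_H$, then for every $s \in S_G$ adjacent to $c_r$ the entire fibre $\{s\} \times C_H$ inside $B$ is blocked, \emph{independently of which columns of the row are used}. Deleting exactly these $B$-vertices and adjoining the whole row of $E$, I expect to prove that the resulting graph $P_r$ is again bipartite: the surviving $B$-vertices, namely those $(s,d)$ with $s$ non-adjacent to $c_r$, which I call $B'$, have no neighbour in the row, so the partition into $\{A \cup D\}$ and $\{B' \cup (\{c_r\} \times C_H)\}$ has no monochromatic edge. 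In this way the delicate trade-off between how many row-vertices of $E$ to take and how many vertices of $A$ and $D$ they destroy is absorbed into a single bipartite maximum-independent-set computation, avoiding any explicit optimisation over subsets of the row. Performing the symmetric construction $Q_c$ for each column and taking the maximum over all $O(mn)$ bipartite computations yields $\alpha(G \times H)$ in polynomial time; correctness then reduces to checking that the four cases are exhaustive and that each bipartite optimum is a genuine independent set of $G \times H$.
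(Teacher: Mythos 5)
Your proposal is correct and follows essentially the same route as the paper: the same four-block decomposition, the same three cases according to $|I\cap(C_G\times C_H)|$, and in the main case the same key observation that fixing the row $c_r$ blocks exactly the fibres $\{s\}\times C_H$ with $s\in N_G(c_r)$ (your $B'$ is the paper's set $W$), yielding the identical bipartition $(A\cup D,\; B'\cup(\{c_r\}\times C_H))$. Nothing further is needed.
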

\begin{proof}
Let $\{S_1,C_1\}$ and $\{S_2,C_2\}$ be the 
partition of $V(G)$ and $V(H)$, respectively, 
into independent sets and cliques. Let 
$c_i=|C_i|$ and $s_i=|S_i|$ for $i \in \{1,2\}$.  
The vertices of $C_1 \times C_2$ form a rook's graph. 

\medskip 

\noindent
We consider three cases. First consider  
the maximum independent 
sets without any vertex of $V(C_1) \times V(C_2)$. 
Notice that the subgraph of $G \times H$ induced 
by the vertices of 
\[V(S_1) \times V(C_2) \cup V(C_1) \times V(S_2) \cup V(S_1) \times V(S_2)\] 
is bipartite. A maximum independent set in a bipartite graph can be 
computed in polynomial time. 

\medskip 

\noindent
Consider maximum independent sets that contain exactly one 
vertex $(c_1,c_2)$ of $V(C_1) \times V(C_2)$. The maximum 
independent set of this type can be computed as follows. 
Consider the bipartite graph of 
the previous case and remove the neighbors of $(c_1,c_2)$ 
from this graph. 
The remaining graph is bipartite. Maximizing over all 
pairs $(c_1,c_2)$ gives the maximum independent set of this 
type. 

\medskip 

\noindent
Consider maximum independent sets that contain at least 
two vertices of the rook's graph $V(C_1) \times V(C_2)$. 
Then the two vertices must be in one row or in one 
column of the grid, since otherwise they are adjacent. 
Let the vertices of the independent set be contained 
in row $c_1 \in V(C_1)$. Then the vertices of 
$V(S_1) \times V(C_2)$ of the independent set 
are contained in 
\[W=\{\;(s_1,c_2)\;|\; s_1 \notin N_G(c_1) \quad\text{and}\quad 
c_2 \in C_2\;\}.\] 
Consider the bipartite graph with one color class 
defined as the following set of vertices 
\[\{\;(c_i,s_2)\;|\; c_i \in C_1 \;\text{and}\; s_2 \in S_2\;\}
\cup \{\;(s_1,s_2)\;|\; 
s_1 \in V(S_1) \;\text{and}\; s_2 \in V(S_2)\;\},\]   
and the other color class defined as  
\[W \cup \{\;(c_1,c_2)\;|\; c_2 \in C_2\;\}.\]  
Since this graph is bipartite, the maximum independent set 
of this type can be computed in polynomial time by maximizing 
over the rows $c_1 \in C_1$ and columns $c_2 \in C_2$. 

\medskip 

\noindent
This proves the theorem. 
\qed\end{proof}

\section{Tensor capacity}

In this section we consider the powers of a graph 
under the categorical product. 

\bigskip 

\begin{definition}
The independence ratio of a graph $G$ is 
defined as 
\begin{equation}
\label{eqn2}
i(G) = \frac{\alpha(G)}{|V(G)|}.
\end{equation}
\end{definition}
For background information on the related Hall-ratio we 
refer to~\cite{kn:simonyi,kn:toth3,kn:toth5,kn:toth4}. 

\bigskip 

By~\eqref{eqn0} for any two graphs $G$ and $H$ we 
have 
\begin{equation}
\label{eqn3}
i(G \times H) \geq \max\;\{\;i(G),\;i(H)\;\}.
\end{equation}
It follows that $i(G^k)$ is non-decreasing. Also, it is bounded 
from above by 1 and so the limit when $k \rightarrow \infty$ exists. 
This limit was introduced in~\cite{kn:brown} as the 
`ultimate categorical independence ratio.' See 
also~\cite{kn:alon,kn:hahn2,kn:hell2,kn:lubetzky}. 
For simplicity we call it the 
tensor capacity of a graph. 
Alon and Lubetzky, and also T\'oth claim that computing the 
tensor capacity is NP-complete, but neither 
provides a proof~\cite{kn:alon,kn:lubetzky,kn:toth2,kn:toth4}. 

\begin{definition}
Let $G$ be a graph. The tensor capacity of $G$ is 
\begin{equation}
\label{eqn4}
\Theta^T(G) = \lim_{k \rightarrow \infty} i(G^k).
\end{equation}
\end{definition}

\bigskip 

Brown et al.~\cite[Theorem 3.3]{kn:brown} 
obtain the following lowerbound for the 
tensor capacity. 
\begin{equation}
\label{eqn5}
\Theta^T(G) \geq a(G) 
\quad\text{where}\quad a(G)=\max_{\text{$I$ an independent set}}\; 
\frac{|I|}{|I| + |N(I)|}. 
\end{equation}
It is related to the binding 
number $b(G)$ of the graph $G$. Actually, the binding number 
is less than 1 if and only if $a(G) > \frac{1}{2}$. 
In that case, the binding number is realized by an independent set 
and it is equal to $b(G)=\frac{1-a(G)}{a(G)}$~\cite{kn:kloks,kn:toth2}. 
The binding number is computable in 
polynomial time~\cite{kn:cunningham,kn:kloks,kn:woodall}. 
See also Corollary~\ref{cor pol a^T} below.    

\bigskip 

The following proposition was proved in~\cite{kn:brown}. 

\begin{proposition}
If $i(G) > \frac{1}{2}$ then $\Theta^T(G)=1$. 
\end{proposition}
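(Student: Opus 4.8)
The plan is to exhibit, for each power $k$, an explicit independent set in $G^k$ whose relative size tends to $1$. Fix a maximum independent set $I$ of $G$ and write $n=|V(G)|$ and $p=i(G)=\alpha(G)/n$, so that by hypothesis $p>\frac12$. A vertex of $G^k$ is a $k$-tuple $(x_1,\dots,x_k)$ with each $x_i\in V(G)$, and two tuples are adjacent exactly when they are adjacent in \emph{every} coordinate. The guiding idea is to use membership of coordinates in $I$ as a certificate of non-adjacency: whenever two tuples agree in placing some coordinate inside $I$, they cannot be adjacent there, since $I$ induces no edges.

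Concretely, I would define $S_k\subseteq V(G)^k$ to be the set of tuples having strictly more than $k/2$ of their coordinates in $I$. The first claim to establish is that $S_k$ is an independent set in $G^k$. Take two distinct tuples $x,y\in S_k$, and let $P$ and $Q$ be the sets of coordinates in which $x$ and $y$, respectively, lie in $I$. Since $|P|>k/2$ and $|Q|>k/2$ are integers, one checks $|P|+|Q|>k$ regardless of the parity of $k$, so $P\cap Q\neq\varnothing$. Choosing a coordinate $i\in P\cap Q$, both $x_i$ and $y_i$ lie in $I$, hence they are non-adjacent in $G$, whence $x$ and $y$ are non-adjacent in $G^k$. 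This pigeonhole argument is the crux of the proof and, although elementary, is the step that genuinely uses $p>\frac12$ (the threshold $k/2$ is what forces the overlap); I expect it to be the main point to get right, together with the parity bookkeeping around the strict inequality.

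It then remains to estimate $|S_k|/n^k$ and pass to the limit. Grouping tuples by the number $j$ of coordinates landing in $I$ gives $|S_k|=\sum_{j>k/2}\binom{k}{j}\,\alpha(G)^j\,(n-\alpha(G))^{k-j}$, so that $|S_k|/n^k$ equals the probability that a $\mathrm{Binomial}(k,p)$ variable exceeds $k/2$. Because $p>\frac12$, the mean $kp$ lies a linear distance above $k/2$, and by the law of large numbers (or a Chernoff bound) this tail probability tends to $1$ as $k\to\infty$. Hence $i(G^k)\geq |S_k|/n^k\to 1$, and since $i(G^k)\leq 1$ always, the monotone limit $\Theta^T(G)=\lim_{k}i(G^k)$ equals $1$, as claimed.
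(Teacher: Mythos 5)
Your proof is correct and complete. The paper itself gives no argument for this proposition, only a citation to Brown, Nowakowski and Rall; your majority-coordinate construction (tuples with more than $k/2$ entries in a fixed maximum independent set, pigeonhole to force a common coordinate in $I$, then a binomial tail estimate) is exactly the standard argument from that reference, and the one subtlety --- that a shared coordinate with $x_i=y_i$ also blocks adjacency since graphs here are loopless --- is handled by your phrasing.
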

Therefore, a better lowerbound for $\Theta^T(G)$ is 
provided by 
\begin{equation}
\label{eqn6}
\Theta^T(G) \geq a^{\ast}(G)=\begin{cases}
a(G) & \quad\text{if $a(G) \leq \frac{1}{2}$}\\
1 & \quad\text{if $a(G) > \frac{1}{2}$.}
\end{cases}
\end{equation}

\bigskip 

\begin{definition}
Let $G=(V,E)$ be a graph. A fractional matching 
is a function $f: E \rightarrow \mathbb{R}^{+}$, which assigns 
a non-negative real number to each edge, such that 
for every vertex $x$ 
\[\sum_{e \ni x} \; f(e) \leq 1.\]
A fractional matching $f$ is perfect if it achieves the 
maximum 
\[f(E)=\sum_{e \in E}\; f(e) = \frac{|V|}{2}.\] 
\end{definition}
 
Alon and Lubetzky proved the following theorem in~\cite{kn:alon} 
(see also~\cite{kn:kloks}). 

\begin{theorem}
For every graph $G$ 
\begin{equation}
\label{eqn7}
\Theta^T(G)=1 \;\Leftrightarrow\; a^{\ast}(G)=1 \;
\Leftrightarrow\; \text{$G$ has no fractional perfect matching.}
\end{equation}
\end{theorem}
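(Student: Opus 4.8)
The plan is to establish the two equivalences $a^{\ast}(G)=1 \Leftrightarrow$ ``$G$ has no fractional perfect matching'' and $\Theta^T(G)=1 \Leftrightarrow a^{\ast}(G)=1$ separately, closing the logical cycle through the lower bound already recorded in~\eqref{eqn6}. I would first dispose of the matching characterization. By the definition of $a^{\ast}$, one has $a^{\ast}(G)=1$ exactly when $a(G)>\tfrac12$, and $a(G)>\tfrac12$ holds precisely when some independent set $I$ satisfies $|I|>|N(I)|$. So it suffices to show that $G$ fails to have a fractional perfect matching if and only if such an $I$ exists. I would obtain this from linear-programming duality: a fractional perfect matching exists iff the maximum fractional matching equals $|V|/2$, iff (by LP duality) the minimum fractional vertex cover $\tau_f(G)$ equals $|V|/2$. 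Using the half-integrality of the fractional vertex-cover polytope, an optimal cover partitions $V$ into the sets where it takes the values $0,\tfrac12,1$; the $0$-set $S$ is independent with $N(S)$ contained in the $1$-set, and optimizing the cost yields $\tau_f(G)=\tfrac12\bigl(|V|-\max_{S}(|S|-|N(S)|)\bigr)$, the maximum taken over independent sets. Hence $\tau_f(G)=|V|/2$ iff $|S|\le|N(S)|$ for every independent $S$, which is exactly the claim.

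The direction $a^{\ast}(G)=1\Rightarrow\Theta^T(G)=1$ is then immediate, since~\eqref{eqn6} gives $\Theta^T(G)\ge a^{\ast}(G)=1$ while $\Theta^T(G)\le 1$ always. The substantive direction is the contrapositive $a^{\ast}(G)<1\Rightarrow\Theta^T(G)<1$. Because $a^{\ast}(G)<1$ forces $a(G)\le\tfrac12$, the matching characterization tells us that $G$ \emph{does} have a fractional perfect matching, so I want to show that a fractional perfect matching forces $\Theta^T(G)<1$; in fact I expect the stronger bound $i(G^k)\le\tfrac12$ for every $k$. The key structural input is that a fractional perfect matching may be chosen to be a vertex of the fractional matching polytope, hence half-integral, with support a vertex-disjoint collection of edges and odd cycles covering $V(G)$. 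This partitions $V(G)$ into blocks $B_1,\dots,B_m$, each carrying a spanning edge or spanning odd cycle.

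I would then bound $\alpha(G^k)$ by a product decomposition. The boxes $B_{j_1}\times\cdots\times B_{j_k}$ partition $V(G^k)$, and the subgraph of $G^k$ induced on such a box is the categorical product $\prod_{i} G[B_{j_i}]$, so intersecting an independent set with each box gives
\[
\alpha(G^k)\ \le\ \sum_{(j_1,\dots,j_k)}\alpha\Bigl(\prod_{i=1}^{k} G[B_{j_i}]\Bigr).
\]
Replacing each $G[B_{j_i}]$ by its spanning edge or odd cycle only deletes edges from the product and hence can only enlarge the independence number, so each summand is at most $\alpha$ of a categorical product of edges and odd cycles. These factors are vertex-transitive, and their products are again vertex-transitive, so the equality case of~\eqref{eqn0} applies iteratively and shows that such a product has independence ratio $\max_i i(H_{j_i})\le\tfrac12$. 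Summing $\tfrac12\prod_i|B_{j_i}|$ over all index tuples factors as $\tfrac12\prod_{i=1}^{k}\bigl(\sum_j|B_j|\bigr)=\tfrac12\,|V(G)|^{k}$, giving $i(G^k)\le\tfrac12$ and hence $\Theta^T(G)\le\tfrac12<1$.

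The main obstacle is this last upper bound on $\alpha(G^k)$: inequality~\eqref{eqn6} hands us the reverse inequality for free, but excluding $\Theta^T(G)=1$ in the presence of a fractional perfect matching requires the half-integral vertex-of-the-polytope structure together with the vertex-transitive equality in~\eqref{eqn0}, and it requires checking that the two reductions---passing to induced boxes and then to their spanning edge/odd-cycle subgraphs---each move the independence number in the direction that increases it, so that the final estimate is genuinely an upper bound.
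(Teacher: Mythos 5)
Your proof is correct, but note first that the paper itself does not prove this theorem: it is imported from Alon and Lubetzky, so the comparison below is with their standard argument rather than with anything in the text. Your first equivalence ($a^{\ast}(G)=1$ iff no fractional perfect matching, via half-integral optimal fractional vertex covers and the identity $\tau_f(G)=\tfrac12\bigl(|V|-\max_{S}(|S|-|N(S)|)\bigr)$ over independent $S$) is the same bookkeeping everyone does, and the forward direction $a^{\ast}(G)=1\Rightarrow\Theta^T(G)=1$ is indeed free from~\eqref{eqn6}. Where you genuinely diverge is the substantive direction. The usual proof observes that fractional perfect matchings are preserved by the categorical product: if $f$ and $g$ are fractional perfect matchings of $G$ and $H$, assigning weight $f(e_G)\,g(e_H)$ to each of the two product edges arising from a pair $(e_G,e_H)$ gives one for $G\times H$; hence $G^k$ has a fractional perfect matching and $\alpha(G^k)\le |V(G^k)|-\tau(G^k)\le |V(G^k)|-\nu_f(G^k)=\tfrac12|V(G)|^k$ directly, with no decomposition at all. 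You instead extract the extreme-point structure of the fractional matching polytope (a spanning family of disjoint edges and odd cycles), tile $V(G^k)$ by boxes, pass to spanning edge/odd-cycle subgraphs (correctly noting both reductions can only increase $\alpha$), and then invoke Zhang's theorem that equality holds in~\eqref{eqn0} for vertex-transitive factors. Every step checks out --- the boxes do induce the product of the induced subgraphs, products of vertex-transitive graphs are vertex-transitive, and $i(K_2),i(C_{2\ell+1})\le\tfrac12$ --- but Zhang's 2012 theorem is a far heavier tool than the problem needs (and postdates Alon--Lubetzky); even within your own framework you could replace it by the observation that each box factor has a fractional perfect matching, so the product does too, giving $\alpha\le\tfrac12\prod_i|B_{j_i}|$ elementarily. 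What your route buys is a concrete combinatorial picture of where the independent vertices can sit; what it costs is a dependence on a deep external result where a two-line LP argument suffices.
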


\begin{corollary}
\label{cor pol a^T}
There exists a polynomial-time algorithm to decide 
whether 
\[\Theta^T(G) =1 \quad\text{or}\quad \Theta^T(G) \leq \frac{1}{2}.\]  
\end{corollary}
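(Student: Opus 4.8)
The plan is to reduce the decision to a single fractional-matching test, using the theorem of Alon and Lubetzky stated in~\eqref{eqn7}, and to supply the dichotomy that makes the statement meaningful, namely that $\Theta^T(G)$ never lies strictly between $\frac12$ and $1$. By~\eqref{eqn7} we have $\Theta^T(G)=1$ exactly when $G$ has no fractional perfect matching, equivalently when $a^{\ast}(G)=1$, equivalently (by~\eqref{eqn6}) when $a(G)>\frac12$. The last condition is the one that is easiest to test algorithmically, so the two halves of the proof are: (i) establish the gap ``$\Theta^T(G)\le\frac12$ or $\Theta^T(G)=1$,'' and (ii) exhibit a polynomial-time test that tells the two cases apart.

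For step (i) I would first observe that the tensor capacity is invariant under taking powers. Since $i(G^n)$ is non-decreasing and converges to $\Theta^T(G)$, the subsequence $i(G^{km})$ for $m=1,2,\dots$ converges to the same value, so $\Theta^T(G^k)=\lim_m i((G^k)^m)=\Theta^T(G)$ for every $k$. Now apply the Proposition (if $i(H)>\frac12$ then $\Theta^T(H)=1$) to the graph $H=G^k$: whenever $i(G^k)>\frac12$ we obtain $\Theta^T(G)=\Theta^T(G^k)=1$. Contrapositively, if $\Theta^T(G)<1$ then $i(G^k)\le\frac12$ for every $k$, and letting $k\to\infty$ gives $\Theta^T(G)\le\frac12$. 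This is precisely the dichotomy.

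With the dichotomy in hand, step (ii) is immediate. I would test whether $G$ has a fractional perfect matching; this is a linear-programming feasibility question and is solvable in polynomial time, and it can equally be decided by computing the binding number $b(G)$, which is polynomial-time computable~\cite{kn:cunningham,kn:kloks,kn:woodall}, and checking whether $b(G)<1$ (recall that $b(G)<1$ iff $a(G)>\frac12$, iff $G$ has no fractional perfect matching). If $G$ has no fractional perfect matching, report $\Theta^T(G)=1$, which is correct by~\eqref{eqn7}. Otherwise $a^{\ast}(G)<1$, so $\Theta^T(G)\neq1$ by~\eqref{eqn7}, whence $\Theta^T(G)\le\frac12$ by the dichotomy; report $\Theta^T(G)\le\frac12$.

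I expect the main obstacle to be the upper-bound half of the dichotomy. The inequality $\Theta^T(G)\ge a^{\ast}(G)$ coming from~\eqref{eqn5} and~\eqref{eqn6} controls $\Theta^T(G)$ only from below, so by itself it cannot certify $\Theta^T(G)\le\frac12$; one genuinely needs to rule out intermediate values. The power-invariance identity $\Theta^T(G^k)=\Theta^T(G)$ combined with the Proposition is what closes this gap, and verifying that identity carefully (the monotonicity from~\eqref{eqn3} and the subsequence argument) is the only delicate point. Everything else is routine once the fractional-matching characterization in~\eqref{eqn7} is invoked.
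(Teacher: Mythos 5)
Your proof is correct and follows the route the paper intends: the corollary is stated without an explicit proof, immediately after the Alon--Lubetzky equivalence~\eqref{eqn7} and the remark that the binding number (equivalently, the existence of a fractional perfect matching) is computable in polynomial time, and that test is exactly the algorithm you give. Your explicit derivation of the dichotomy ``$\Theta^T(G)=1$ or $\Theta^T(G)\le\frac{1}{2}$'' from the Proposition together with the power-invariance $\Theta^T(G^k)=\Theta^T(G)$ correctly fills in the one step the paper leaves implicit.
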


\bigskip 
 
The following theorem was raised as a 
question by Alon and Lubetzky  
in~\cite{kn:alon,kn:lubetzky}. The theorem was proved by 
\'Agnes T\'oth~\cite{kn:toth2}. 

\begin{theorem}
\label{thm toth}
For every graph $G$ 
\[ \Theta^T(G)=a^{\ast}(G).\]
Equivalently, every graph $G$ satisfies  
\begin{equation}
\label{eqn8}
a^{\ast}(G^2)=a^{\ast}(G). 
\end{equation}
\end{theorem}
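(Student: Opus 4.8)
The plan is to prove the equivalent reformulation $a^*(G^2)=a^*(G)$, since this is the statement that actually drives the result, and to organize everything around a single hard inequality. First I would record the soft ingredients. For every graph $H$ one has $a^*(H)\ge a(H)\ge i(H)$: taking $I$ to be a maximum independent set gives $a(H)\ge \frac{\alpha(H)}{\alpha(H)+|N(I)|}\ge \frac{\alpha(H)}{|V(H)|}=i(H)$ because $N(I)\subseteq V(H)\setminus I$. Next, the sequence $i(G^k)$ is monotone and convergent by~\eqref{eqn3}, so its limit $\Theta^T(G)$ equals the limit of the subsequence $i(G^{2^m})$. Hence, once the single inequality $a^*(H^2)\le a^*(H)$ is available for all $H$, iterating it gives $a^*(G^{2^m})\le a^*(G)$, and then
\[
i(G^{2^m})\ \le\ a^*(G^{2^m})\ \le\ a^*(G)\qquad\text{for every } m,
\]
so $\Theta^T(G)=\lim_{m\to\infty} i(G^{2^m})\le a^*(G)$. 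Together with the lower bound $\Theta^T(G)\ge a^*(G)$ from~\eqref{eqn6} this yields $\Theta^T(G)=a^*(G)$; applying this to both $G$ and $G^2$ and using $\Theta^T(G^2)=\lim_k i(G^{2k})=\Theta^T(G)$ then returns the reformulation $a^*(G^2)=a^*(G)$. Thus everything reduces to proving $a^*(G^2)\le a^*(G)$.

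To prove $a^*(G^2)\le a^*(G)$ I would split on the value of $a^*(G)$. If $a^*(G)=1$ there is nothing to do. Otherwise $a(G)\le\tfrac12$, which by~\eqref{eqn7} means $G$ has a fractional perfect matching; writing $r(G)=\min\{\,|N(I)|/|I| : I \text{ a non-empty independent set}\,\}$, this regime is exactly $r(G)\ge 1$ with $a(G)=\frac{1}{1+r(G)}$. It then suffices to establish the expansion bound
\[
|N_{G\times G}(J)|\ \ge\ r(G)\,|J|\qquad\text{for every non-empty independent set } J \text{ of } G\times G,
\]
since this gives $r(G^2)\ge r(G)\ge 1$, hence $a(G^2)\le\tfrac12$, $a^*(G^2)=a(G^2)=\frac{1}{1+r(G^2)}\le\frac{1}{1+r(G)}=a^*(G)$.

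The heart of the matter is this expansion bound, and it is where I expect the real difficulty. I would exploit the column structure of $J$: writing $J_g=\{\,h : (g,h)\in J\,\}$, independence of $J$ in $G\times G$ is equivalent to the condition that for every edge $g\,g'\in E(G)$ there is no edge of $G$ between $J_g$ and $J_{g'}$. A direct computation then gives $|J|=\sum_g |J_g|$ and $|N_{G\times G}(J)|=\sum_g |N_G(J_{N(g)})\setminus J_g|$, where $J_{N(g)}=\bigcup_{g'\sim g} J_{g'}$, so the target becomes $\sum_g |N_G(J_{N(g)})\setminus J_g|\ge r(G)\sum_g |J_g|$. The obstacle is that the sets $J_g$ and their unions $J_{N(g)}$ need not be independent in $G$, so the defining inequality $|N_G(I)|\ge r(G)|I|$ cannot be applied to them verbatim. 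Overcoming this is the crux: the plan is to convert the product independent set into ordinary independent sets of $G$, to which $r(G)$ does apply. I would attempt this by a layering argument over one coordinate, slicing $J$ into level sets and redistributing weight along a fractional perfect matching of $G$ — available precisely because $a(G)\le\tfrac12$ — so that the per-column deficits $|N_G(J_{N(g)})\setminus J_g|$ aggregate to at least $r(G)\sum_g|J_g|$. Carrying out this aggregation while respecting the cross-column non-adjacency constraint is the main technical step, and the part I expect to require the most care.
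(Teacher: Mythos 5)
Your reduction is sound, and it is in fact exactly the scaffolding the paper itself uses: the paper does not prove this theorem but attributes it to T\'oth, records the key inequality~\eqref{eqn11} (equivalently, the neighbourhood-expansion bound $|N_{G\times H}(I)|\geq |I|\cdot\min\{b(G),b(H)\}$ for every independent set $I$ of $G\times H$), and remarks that the theorem ``easily follows.'' Your chain $i(G^{2^m})\leq a^{\ast}(G^{2^m})\leq a^{\ast}(G)$, the passage to the limit along the subsequence $G^{2^m}$, the lower bound from~\eqref{eqn6}, the identity $a(G)=1/(1+r(G))$, and the observation that it suffices to prove $|N_{G\times G}(J)|\geq r(G)\,|J|$ when $r(G)\geq 1$ are all correct; your target inequality is precisely T\'oth's lemma specialized to $H=G$. (Using~\eqref{eqn7} here is not circular, since that is Alon and Lubetzky's independent result, though the fractional perfect matching it supplies plays no visible role in what follows.)

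The genuine gap is that this expansion lemma is the entire content of the theorem and you do not prove it. Everything you establish rigorously --- monotonicity of $i(G^k)$, the inequalities $i\leq a\leq a^{\ast}$, the algebra relating $a$, $a^{\ast}$ and $r$ --- is routine; the bound $|N_{G\times G}(J)|\geq r(G)\,|J|$ for independent sets $J$ of the product is the hard step, it is the reason the question stayed open between Alon--Lubetzky and T\'oth, and your closing paragraph about slicing $J$ into level sets and redistributing weight along a fractional perfect matching is a plan, not an argument. You correctly identify the obstruction (the columns $J_g$ and the unions $J_{N(g)}$ need not be independent in $G$, so the defining property of $r(G)$ cannot be applied to them), but you do not overcome it, and it is not at all clear that the proposed redistribution can be carried out while respecting the cross-column non-adjacency constraint. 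As written, the proposal proves the theorem modulo the one lemma that carries all the difficulty --- which happens to coincide with exactly the part the paper outsources to the citation of T\'oth's manuscript. To count as a proof, the expansion bound must be carried out in full.
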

T\'oth proves that 
\begin{equation}
\label{eqn11}
\text{if $a(G) \leq \frac{1}{2}$ or $a(H) \leq \frac{1}{2}$ then}
\quad 
a(G \times H) \leq \max\;\{\;a(G),\;a(H)\;\}.
\end{equation}
Actually, T\'oth shows that, if $I$ is an independent 
set in $G \times H$ then   
\[|N_{G \times H}(I)| \geq |I| \cdot \min\;\{\;b(G),\;b(H)\;\}.\]
{F}rom this, Theorem~\ref{thm toth} easily follows. 
As a corollary 
(see~\cite{kn:alon,kn:lubetzky,kn:toth2})   
one obtains that, 
for any two graphs $G$ and $H$
\[i(G \times H) \leq \max\;\{\;a^{\ast}(G),\;a^{\ast}(H)\;\}.\] 

\bigskip 

T\'oth also proves the 
following theorem in~\cite{kn:toth2}. 
This was conjectured by 
Brown et al.~\cite{kn:brown}. 

\begin{theorem} 
\label{thm tensor cap union}
For any two graphs $G$ and $H$, 
\begin{equation}
\label{eqn9}
\Theta^T(G \oplus H)=\max\;\{\;\Theta^T(G),\;\Theta^T(H)\;\}.
\end{equation}
\end{theorem}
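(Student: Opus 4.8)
The plan is to invoke Theorem~\ref{thm toth}, which gives $\Theta^T=a^{\ast}$, and thereby replace the claimed identity~\eqref{eqn9} by the purely combinatorial statement
\[
a^{\ast}(G \oplus H)=\max\;\{\;a^{\ast}(G),\;a^{\ast}(H)\;\}.
\]
The strategy is to first establish the sharper, unstarred identity $a(G \oplus H)=\max\{a(G),a(H)\}$, and then to transfer it across the threshold $\tfrac12$ by a short case analysis. This two-stage approach keeps the combinatorial core (the behaviour of $a$ under disjoint union) separate from the bookkeeping introduced by the definition~\eqref{eqn6} of $a^{\ast}$.

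For the unstarred identity, I would exploit the structure of disjoint union. Every independent set $I$ of $G \oplus H$ splits uniquely as $I=I_G \cup I_H$ with $I_G \subseteq V(G)$ and $I_H \subseteq V(H)$ independent, and since $G \oplus H$ has no edge joining $V(G)$ to $V(H)$, the neighborhoods split as well: $N(I)=N_G(I_G)\,\sqcup\,N_H(I_H)$. Consequently the defining ratio in~\eqref{eqn5} becomes
\[
\frac{|I|}{|I|+|N(I)|}
=\frac{|I_G|+|I_H|}{\bigl(|I_G|+|N_G(I_G)|\bigr)+\bigl(|I_H|+|N_H(I_H)|\bigr)},
\]
which, when both $I_G$ and $I_H$ are nonempty, is the mediant of the two side ratios $\tfrac{|I_G|}{|I_G|+|N_G(I_G)|}$ and $\tfrac{|I_H|}{|I_H|+|N_H(I_H)|}$. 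A mediant of two nonnegative fractions never exceeds the larger of them, and each side ratio is at most $a(G)$ or $a(H)$ respectively; hence every $I$ gives a ratio $\le\max\{a(G),a(H)\}$, which yields the upper bound $a(G \oplus H)\le\max\{a(G),a(H)\}$. For the matching lower bound, I would take a set realizing the larger of $a(G),a(H)$ and regard it as an independent set of $G\oplus H$ with the other part empty; its ratio is exactly that maximum.

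Finally I would deduce the starred identity. If $\max\{a(G),a(H)\}>\tfrac12$, then $a(G \oplus H)>\tfrac12$, so the left side equals $1$; simultaneously at least one of $a^{\ast}(G),a^{\ast}(H)$ equals $1$, so the right side equals $1$. If instead $\max\{a(G),a(H)\}\le\tfrac12$, then $a(G)=a^{\ast}(G)$, $a(H)=a^{\ast}(H)$, and $a(G\oplus H)=a^{\ast}(G\oplus H)$, so the unstarred identity gives the result directly. I expect the main obstacle to be the mediant step: one must verify the interval inequality carefully and handle the degenerate cases where $I_G$ or $I_H$ is empty (so that the ratio is not literally a mediant), since it is precisely these one-sided sets that realize the maximum in the lower bound.
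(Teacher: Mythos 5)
Your proposal is correct, but it reaches the theorem by a different route than the paper. The paper follows T\'oth's original derivation: it invokes the Alon--Lubetzky identity $\Theta^T(G \oplus H) = \Theta^T(G \times H)$ (Lemma~\ref{lm tensor cap join}) and then bounds $\Theta^T(G \times H)$ using the product inequality~\eqref{eqn11}, together with the trivial lower bound. You instead apply Theorem~\ref{thm toth} to replace $\Theta^T$ by $a^{\ast}$ on both sides and then verify the union formula for $a^{\ast}$ directly --- this is exactly the ``straightforward analogue'' that the paper mentions in passing but never writes out, and your mediant argument (that $\tfrac{|I_G|+|I_H|}{(|I_G|+|N_G(I_G)|)+(|I_H|+|N_H(I_H)|)}$ lies between the two side ratios, with the one-sided degenerate cases handled separately) is a complete and correct way to do it; the threshold bookkeeping for $a^{\ast}$ versus $a$ is also handled correctly. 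Both routes lean on the same deep ingredient, namely T\'oth's theorem that $a^{\ast}(G^2)=a^{\ast}(G)$, so neither is more elementary at bottom; what your version buys is a self-contained, purely local computation on the disjoint union that avoids the detour through the categorical product and Lemma~\ref{lm tensor cap join}, while the paper's version reflects the historical logic by which the Brown--Nowakowski--Rall conjecture was actually settled. One small point to make explicit if you write this up: the maximum defining $a$ in~\eqref{eqn5} must be taken over nonempty independent sets, which is what licenses the positive denominators in your mediant step.
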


Notice that the analogue of this statement, 
with $a^{\ast}$ 
instead of $\Theta^T$, is straightforward. The theorem  
follows from~\eqref{eqn11} via the following lemma.  
This lemma was proved by 
Alon and Lubetzky in~\cite{kn:alon}. 

\begin{lemma}
\label{lm tensor cap join}
 For any two graphs $G$ and $H$, 
\begin{equation}
\label{eqn10}
\Theta^T(G \oplus H) = \Theta^T(G \times H).
\end{equation}
\end{lemma}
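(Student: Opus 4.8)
The plan is to expand a power of $G \oplus H$ into disjoint components via distributivity of $\times$ over $\oplus$, to recognize $i((G \oplus H)^k)$ as a binomial average of the quantities $i(G^j \times H^{k-j})$, and then to control those quantities through monotonicity together with a concentration argument. First I would record the distributive law $(G \oplus H) \times K \simeq (G \times K) \oplus (H \times K)$, which is immediate from the definition of the categorical product since $G \oplus H$ has no edges between $V(G)$ and $V(H)$. Iterating this and grouping the $2^k$ resulting components by the number $j$ of factors equal to $G$ gives, using commutativity and associativity of $\times$,
\[(G \oplus H)^k \simeq \bigoplus_{j=0}^{k} \binom{k}{j}\,(G^j \times H^{k-j}).\]
Since $\alpha$ is additive over disjoint unions and $|V(G^j \times H^{k-j})| = n_G^{\,j} n_H^{\,k-j}$ with $n_G = |V(G)|$, $n_H = |V(H)|$, dividing by $|V((G\oplus H)^k)| = (n_G+n_H)^k$ yields
\[i((G \oplus H)^k) = \sum_{j=0}^{k} p_j\, i(G^j \times H^{k-j}), \qquad p_j = \binom{k}{j} p^{\,j}(1-p)^{k-j},\ \ p = \tfrac{n_G}{n_G+n_H}.\]
Thus $i((G \oplus H)^k)$ is exactly the $\mathrm{Bin}(k,p)$-average of the numbers $i(G^j \times H^{k-j})$.

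Next I would establish the off-diagonal limit $L := \lim_{a,b \to \infty} i(G^a \times H^b) = \Theta^T(G \times H)$. Monotonicity is the key point: writing $G^a \times H^b = G \times (G^{a-1} \times H^b)$ and applying~\eqref{eqn3} shows $i(G^a \times H^b)$ is non-decreasing in $a$, and symmetrically in $b$; being bounded by $1$, this double sequence converges to its supremum $L$. The diagonal $a = b = m$ gives $i((G \times H)^m) \to \Theta^T(G \times H)$, whence $L \geq \Theta^T(G \times H)$; conversely, for $a \leq b$ monotonicity gives $i(G^a \times H^b) \leq i((G \times H)^b) \leq \Theta^T(G \times H)$, so $L \leq \Theta^T(G \times H)$, and equality follows.

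Finally I would prove the two inequalities. For $\Theta^T(G \oplus H) \leq L$, every summand satisfies $i(G^j \times H^{k-j}) \leq L$, so the average obeys $i((G \oplus H)^k) \leq L$ for all $k$, and the limit is $\leq L$. For the reverse, fix $\epsilon > 0$ and choose thresholds $A, B$ with $i(G^a \times H^b) \geq L - \epsilon$ whenever $a \geq A$ and $b \geq B$; since $0 < p < 1$, the binomial mass $\sum_{A \leq j \leq k-B} p_j$ tends to $1$ as $k \to \infty$, and therefore $i((G \oplus H)^k) \geq (L - \epsilon)\sum_{A \leq j \leq k-B} p_j$ tends to at least $L - \epsilon$. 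Letting $\epsilon \to 0$ gives $\Theta^T(G \oplus H) \geq L = \Theta^T(G \times H)$, completing the proof.

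The main obstacle is the lower bound. A single component $(G \times H)^m$ sits inside $(G \oplus H)^{2m}$ only with multiplicity $\binom{2m}{m}$, whose relative weight $\binom{2m}{m}(n_G n_H)^m/(n_G+n_H)^{2m}$ tends to $0$, so extracting one favorable component is hopeless. This is precisely why the argument must sum over the whole concentrated bulk of the binomial distribution, and why establishing the off-diagonal limit $L = \Theta^T(G \times H)$, rather than merely the diagonal limit that defines $\Theta^T(G \times H)$, is essential.
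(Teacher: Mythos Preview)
The paper does not supply its own proof of this lemma; it merely states the result and attributes it to Alon and Lubetzky~\cite{kn:alon}. So there is nothing in the paper to compare against directly.

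Your argument is correct and is the natural one (and, as far as I am aware, essentially the one Alon and Lubetzky give). The distributive law $(G\oplus H)\times K\simeq (G\times K)\oplus(H\times K)$ is immediate, the resulting binomial expansion
\[
(G\oplus H)^k\;\simeq\;\bigoplus_{j=0}^{k}\binom{k}{j}\,G^{j}\times H^{k-j}
\]
is valid (with the obvious convention at $j=0,k$), and your rewriting of $i((G\oplus H)^k)$ as a $\mathrm{Bin}(k,p)$-average is exactly right. The identification of the double limit $L=\sup_{a,b}i(G^{a}\times H^{b})$ with $\Theta^T(G\times H)$ via two-variable monotonicity and sandwiching by the diagonal is clean and correct; in particular your upper bound uses that every term obeys $i(G^{j}\times H^{k-j})\le i((G\times H)^{k})\le L$, and your lower bound via fixed thresholds $A,B$ and binomial concentration is sound since $0<p<1$. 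Your closing remark about why one cannot simply pick off the single central component is a nice diagnostic of where the real content lies.

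One cosmetic point: you might state explicitly that the monotonicity $i(G^{a}\times H^{b})\ge i(G^{a-1}\times H^{b})$ also covers the step from $a=0$ to $a=1$ (i.e., $i(G\times H^{b})\ge i(H^{b})$), so that the boundary terms $i(H^{k})$ and $i(G^{k})$ are indeed bounded above by $L$; this is implicit in your use of~\eqref{eqn3} but worth a word since those terms sit outside the range $A\le j\le k-B$.
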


\bigskip 

T\'oth proves the following theorem in~\cite[Corollary 3]{kn:toth}.
This is proved as a corollary of a 
theorem which says that, 
if for all $x \in V$, $d(x) \geq n-\alpha(G)$, and if 
$i(G) \leq \frac{1}{2}$, then 
for all $k \in \mathbb{N}$,  
\[i(G^k)=i(G).\] 
  
\begin{theorem}
Let $G$ be a complete multipartite graph. 
Let $\alpha$ be the size of the largest partite class of $G$. 
Then 
\[\Theta^T(G)= 
\begin{cases} 
i(G)=\frac{\alpha}{n} & \quad\text{if $\alpha \leq \frac{n}{2}$}\\
1 & \quad\text{otherwise.}
\end{cases}\] 
\end{theorem}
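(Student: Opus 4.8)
The plan is to reduce everything to T\'oth's identity $\Theta^T(G)=a^{\ast}(G)$ from Theorem~\ref{thm toth}, so that the only genuine computation is the evaluation of $a(G)$ for a complete multipartite graph. First I would record the structure of the independent sets: in a complete multipartite graph two vertices are adjacent exactly when they lie in different partite classes, so the independent sets are precisely the nonempty subsets of a single class. In particular $\alpha(G)$ equals the size of the largest class, i.e.\ the number $\alpha$ in the statement, and therefore $i(G)=\alpha/n$, which already supplies the value claimed in the regime $\alpha\le n/2$.

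Next I would compute the relevant neighbourhoods. Writing $V_1,\dots,V_r$ for the partite classes and fixing a nonempty independent set $I\subseteq V_j$, every vertex of $V_j$ is adjacent to all vertices outside $V_j$ and to none inside $V_j$; hence $N(I)=V\setminus V_j$, irrespective of which nonempty $I$ we pick (this remains correct in the edgeless case $r=1$, where it degenerates to $N(I)=\es$). Consequently
\[
\frac{|I|}{|I|+|N(I)|}=\frac{|I|}{\,|I|+(n-|V_j|)\,},
\]
and since $x\mapsto x/(x+c)$ is non-decreasing for the constant $c=n-|V_j|\ge 0$, the ratio within the class $V_j$ is maximised by $I=V_j$, giving $|V_j|/n$. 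Maximising over $j$ then yields
\[
a(G)=\max_{1\le j\le r}\frac{|V_j|}{n}=\frac{\alpha}{n}.
\]

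With $a(G)=\alpha/n$ established, it remains only to unfold the definition of $a^{\ast}$ in~\eqref{eqn6}. The threshold $a(G)\le\tfrac12$ translates verbatim into $\alpha\le n/2$, so $a^{\ast}(G)=a(G)=\alpha/n$ when $\alpha\le n/2$ and $a^{\ast}(G)=1$ when $\alpha>n/2$. Replacing $a^{\ast}(G)$ by $\Theta^T(G)$ via Theorem~\ref{thm toth} settles both branches simultaneously, and the identity $i(G)=\alpha/n$ from the first step matches the stated value in the first case.

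Because Theorem~\ref{thm toth} carries the analytic weight, there is no serious obstacle here; the only place demanding attention is the computation of $a(G)$, specifically the two small observations that the optimal independent set is an \emph{entire} partite class (monotonicity of $x/(x+c)$) and that $N(I)$ collapses to $V\setminus V_j$ for every nonempty $I\subseteq V_j$. As an independent check one can also recover the dichotomy through the degree route quoted before the statement: each vertex satisfies $d(x)=n-|V_j|\ge n-\alpha(G)$, so when $i(G)\le\tfrac12$ T\'oth's degree theorem gives $i(G^k)=i(G)$ for all $k$ and hence $\Theta^T(G)=i(G)=\alpha/n$, while $i(G)>\tfrac12$ triggers the earlier proposition and forces $\Theta^T(G)=1$.
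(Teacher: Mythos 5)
Your proposal is correct, but its main line of argument is not the one the paper leans on. The paper gives no self-contained proof: it quotes the result as T\'oth's \cite[Corollary~3]{kn:toth} and indicates only the route by which T\'oth derived it, namely the degree theorem (if $d(x)\geq n-\alpha(G)$ for all $x$ and $i(G)\leq\tfrac12$, then $i(G^k)=i(G)$ for all $k$) combined with the proposition that $i(G)>\tfrac12$ forces $\Theta^T(G)=1$. You instead make the deeper identity $\Theta^T(G)=a^{\ast}(G)$ of Theorem~\ref{thm toth} do the work and reduce everything to the elementary computation $a(G)=\alpha/n$, which you carry out correctly: independent sets live inside a single partite class $V_j$, $N(I)=V\setminus V_j$ for every nonempty $I\subseteq V_j$, and monotonicity of $x\mapsto x/(x+c)$ pushes the optimum to $I=V_j$. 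Both routes are legitimate given what the paper already cites; the trade-off is that yours is a two-line calculation resting on T\'oth's 2011 theorem (which the paper states anyway and reuses for its cograph algorithm), whereas the degree route uses only the earlier and more elementary result and yields the stronger conclusion that $i(G^k)=i(G)$ for \emph{every} finite $k$ when $\alpha\leq n/2$, not merely in the limit. Since you also sketch the degree route as a cross-check at the end, your write-up in fact subsumes the paper's argument.
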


\bigskip 

For cographs we obtain the following theorem. 

\begin{theorem}
There exists a polynomial-time algorithm to compute 
the tensor capacity for cographs. 
\end{theorem}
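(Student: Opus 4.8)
The plan is to reduce the computation of $\Theta^T(G)$ to the computation of the parameter $a(G)$, and then to compute $a(G)$ by dynamic programming over the cotree. By Theorem~\ref{thm toth} we have $\Theta^T(G)=a^{\ast}(G)$, and by~\eqref{eqn6} the value $a^{\ast}(G)$ is completely determined by $a(G)$: it equals $a(G)$ when $a(G)\le\frac{1}{2}$ and equals $1$ otherwise. Hence it suffices to compute $a(G)$ exactly and then apply this threshold rule. The key reformulation is to fix the size of the independent set. For $k\ge 1$ define
\begin{equation*}
f_G(k)=\min\;\{\;|N_G(I)|\;\mid\;\text{$I$ independent in $G$ and $|I|=k$}\;\},
\end{equation*}
and set $f_G(0)=0$. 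Since for a fixed cardinality $|I|=k$ the ratio $\frac{k}{k+|N_G(I)|}$ is a decreasing function of $|N_G(I)|$, it is maximized by an independent set whose neighborhood is smallest, so
\begin{equation*}
a(G)=\max_{1\le k\le\alpha(G)}\;\frac{k}{k+f_G(k)}.
\end{equation*}
Thus the whole task reduces to computing the neighborhood profile $f_G(\cdot)$.

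I would compute $f_G$ bottom-up along the cotree $(T,f)$. For a single vertex, $f_{K_1}(0)=f_{K_1}(1)=0$. For an internal node labeled $\oplus$, say $G=G_1\oplus G_2$, an independent set splits into independent sets of the two components with no edges between them, so its neighborhood is the disjoint union of the two neighborhoods; this yields the min-plus convolution
\begin{equation*}
f_{G_1\oplus G_2}(k)=\min_{k_1+k_2=k}\;\bigl(\,f_{G_1}(k_1)+f_{G_2}(k_2)\,\bigr).
\end{equation*}
For an internal node labeled $\otimes$, say $G=G_1\otimes G_2$, every vertex of $G_1$ is adjacent to every vertex of $G_2$, so any nonempty independent set lies entirely in one factor; if it lies in $G_1$ its neighborhood is $N_{G_1}(I)\cup V(G_2)$, and symmetrically for $G_2$, giving
\begin{equation*}
f_{G_1\otimes G_2}(k)=\min\;\{\;f_{G_1}(k)+|V(G_2)|,\;f_{G_2}(k)+|V(G_1)|\;\}.
\end{equation*}
These are precisely the structural facts already exploited in Lemmas~\ref{lm union} and~\ref{lm join}, now refined so as to track the neighborhood size for every cardinality $k$ rather than only for the maximum independent set.

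The cotree has $O(n)$ nodes, where $n=|V(G)|$, and can be built in linear time. The only costly step is the convolution at each $\oplus$ node, which costs $O(\alpha(G_1)\cdot\alpha(G_2))$; summing these products over the tree is $O(n^2)$ by the standard tree-knapsack argument, so the entire profile $f_G(\cdot)$, and hence $a(G)$, is obtained in polynomial time. Finally $a^{\ast}(G)$ follows from the threshold rule~\eqref{eqn6}, and $\Theta^T(G)=a^{\ast}(G)$ by Theorem~\ref{thm toth}.

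The step requiring the most care is the passage from the single scalar optimization defining $a(G)$ in~\eqref{eqn5} to the cardinality-indexed profile $f_G(\cdot)$. One must verify that fixing $|I|=k$ makes the objective separable across the $\oplus$ operation, so that the min-plus convolution is exactly valid and no cross terms between components are lost, and that it collapses to a single-factor choice across the $\otimes$ operation because every transversal pair of vertices is adjacent. Once the correctness of the recursion for $f_G$ is established, recovering $a(G)=\max_k \frac{k}{k+f_G(k)}$ and then $a^{\ast}(G)$ and $\Theta^T(G)$ is immediate from the cited results.
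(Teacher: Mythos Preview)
Your proposal is correct and follows essentially the same approach as the paper: reduce to computing $a(G)$ via Theorem~\ref{thm toth}, introduce the table $f_G(k)=\min\{|N(I)|:I\text{ independent},\,|I|=k\}$ (the paper calls it $\ell(k)$), and propagate it up the cotree using min-plus convolution at $\oplus$-nodes and the two-term minimum at $\otimes$-nodes. Your write-up is in fact somewhat more thorough than the paper's, since you supply the base case, justify why fixing $|I|=k$ makes the optimum separable, and give an explicit $O(n^2)$ running-time bound via the tree-knapsack argument.
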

\begin{proof}
By Theorem~\ref{thm toth} it is sufficient to 
compute $a(G)$, as defined in~\eqref{eqn5}. 

\medskip 

\noindent 
Consider a cotree for $G$. For each node the algorithm 
computes a table. The table contains numbers  
$\ell(k)$, for $k \in \mathbb{N}$, where 
\[\ell(k)= min\;\{\;|N(I)|\;|\; \text{$I$ is an independent set with 
$|I|=k$}\;\}.\] 
Notice that $a(G)$ can be obtained from 
the table at the root node via 
\[a(G)=\max_k \; \frac{k}{k+\ell(k)}.\] 

\medskip 

\noindent
Assume $G$ is the union of two cographs $G_1 \oplus G_2$. 
An independent set $I$ is the union of two 
independent sets $I_1$ in $G_1$ and $I_2$ in $G_2$. 
Let the table entries for $G_1$ and $G_2$ be 
denoted by the functions $\ell_1$ and $\ell_2$.  
Then 
\[\ell(k)=\min\;\{\;\ell_1(k_1) + \ell_2(k_2)\;|\; k_1+k_2=k\;\}.\] 

\medskip 

\noindent
Assume that $G$ is the join of two cographs, say 
$G = G_1 \otimes G_2$. 
An independent set in $G$ can have vertices in at most 
one of $G_1$ and $G_2$. Therefore, 
\[\ell(k) = \min\;\{\;\ell_1(k)+|V(G_2)|,\; 
\ell_2(k)+|V(G_1)|\;\}.\] 

\medskip 

\noindent
This proves the theorem. 
\qed\end{proof}

\begin{remark}
The tensor capacity is computable in 
polynomial time for many other classes of graphs 
via similar methods~\cite{kn:kratsch}. 
\end{remark}

\section{Concluding remarks}

It would be interesting to know whether the 
tensor capacity for splitgraphs is computable in polynomial time. 
Also, is the independence number for the product of 
three splitgraphs, $G_1 \times G_2 \times G_3$ NP-complete?

\end{document}